\newtheorem{teorema}{Theorem}[section]
\newtheorem{definicion}[teorema]{Definition}
\newtheorem{proposicion}[teorema]{Proposition}
\newtheorem{lema}[teorema]{Lemma}
\newtheorem{comentario}[teorema]{Remark}
\newtheorem{ejemplo}[teorema]{Example}
\numberwithin{equation}{section}
\begin{document}
\begin{title}[Weak equivalence principle in emergent quantum mechanics]
 {On the origin of the weak equivalence principle in a theory of emergent quantum mechanics}
\end{title}

\maketitle
\begin{center}
\author{Ricardo Gallego Torrom\'e\footnote{email: rigato39@gmail.com}}
\end{center}
\begin{center}
\address{Department of Mathematics\\
Faculty of Mathematics, Natural Sciences and Information Technologies\\
University of Primorska, Koper, Slovenia}
\end{center}
\begin{abstract}
 We argue that in a framework for emergent quantum mechanics, the weak equivalence principle is a consequence of concentration of measure in large dimensional spaces of $1$-Lipshitz  functions. Furthermore, as a consequence of the emergent framework and the properties that we assume for the fundamental dynamics, it is argued that gravity must be a classical, emergent interaction.
\end{abstract}
\bigskip
\section{Introduction}
The weak equivalence principle, namely, the principle of universal free falling of test particles in the presence of an isolated  gravitational field, is a cornerstone of moderns theories of gravity. It is indeed, one of the most accurately confirmed fundamental principles in physics. However, despite the principle is harmonically implemented in the mathematical structure of current gravitational theories, from general relativity to extensions and generalizations of Einstein's theory, such coherence  between the physical principle and the geometric frameworks where it is implemented does not provide an explanatory mechanism for it.

It is the main purpose of this paper to offer a derivation of the weak equivalence principle in the context of an emergent theory of gravity, a theory that the author is developing and exploring in the framework of Hamilton-Randers dynamical systems \cite{Ricardo2014}. Hamilton-Randers theory is a novel approach to the foundations of quantum theory and in some sense, with the aim to go beyond quantum theory. Such a theory builds on several general assumptions. One of the most fundamental of them is the hypothesis on the existence of a deeper level of description of physical systems than the one currently offered by the current quantum theory. In our theory, a deeper description than quantum mechanics based upon different degrees of freedom, referred as fundamental degrees of freedom, is theoretically postulated and investigated. Such fundamental degrees of freedom follow a deterministic and local dynamical law, while the quantum mechanical description of physical systems that sole assigns an element of a projective Hilbert space to each individual quantum system, is obtained as a coarse grained description from this deeper level of description. The mechanism of coarse graining is similar as how the standard thermodynamical description of a system is obtained from classical statistical mechanics. Indeed, there are several emergent theories of quantum mechanics being investigated by other authors, for instance \cite{AcostaFernandezetIsidroSantander2013,Adler,Bohm1980,Blasone2,FernandezIsidroPerea2013,Groessing2013, Singh2019 a}.

In Hamilton-Randers theory the mechanism explaining the emergence of the quantum mechanical description from a fundamental dynamics is based upon three main ingredients. The first is the novel notion of dynamics that we assume for the fundamental degrees of freedom. Such dynamical systems are two-time dynamical systems. But differently from other geometric theories of multi-time dynamics \cite{Bars2001, CraigWeinstein}, in our theory one of the time variable parameters is  necessarily emergent, as a result of the underlying fundamental, non-reversible dynamics. Such time parameters are in fact, the one used in the description of classical field theories, quantum dynamics and when pertinent, general relativistic evolution models or gravitational evolution models. At the more fundamental level of description of physical systems, the level of the dynamics of the fundamental degrees of freedom, the dynamics is non-reversible. Radically different from other approaches to the foundations of quantum theory is the two-dimensional character of time that specifies the dynamics (see chapters 2 and 3 in \cite{Ricardo06, Ricardo2014}). Let us remark that the two-time dynamics that we advocate is radically different from the one that appears in slow/fast dynamical systems \cite{Arnold}. While in classical dynamics, there is a diffeomorphic relation between the two different kind of parameters, in Hamilton-Randers theory the two times are not related by a bijection. They are truly independent parameters. Furthermore, the interpretation of the non-local character of quantum correlations is builds on the two-dimensional character of time in Hamilton-Randers dynamical systems \cite{Ricardo2014,Ricardo2017}. Therefore, our explanation of spooky distance actions is very different. This type of explanation is very different from the offered by superdeterminism \cite{Hooft2016}, even if our theory has some common points with Hooft's one, as we will see.

The second ingredient in Hamilton-Randers theory is the use of Koopman-von Neumann theory of dynamical systems. Originally in such a theory, deterministic dynamical systems are investigated using Hilbert space theory and spectral theory. In emergent quantum mechanics is essential, since it allows to consider deterministic systems from a quantum mechanical point of view. This is pivotal for the coarse grained mechanism that pass from the description using fundamental degrees of freedom to emergent quantum states (see chapters 4 and 5 in \cite{Ricardo06, Ricardo2014}). Similar techniques have been investigated by Hooft in his theory of quantum cellular automata and in earlier work \cite{Hooft,Hooft2006,Hooft2006b,Hooft2016}.

 The third fundamental ingredient in the general scheme of quantum emergent mechanics that we are pursuing is the hypothesis that the relation between the number of degrees of freedom at the fundamental scale and the number of degrees of freedom a the quantum scale are in $N$ to $1$ proportion, with $N>>1$. This is fundamental in the transition from detailed (fundamental) and coarse grained description (quantum) and in the application and consequences of a mathematical theory of analysis known as concentration of measure \cite{Gromov,MilmanSchechtman2001,Talagrand}, although to apply this theory further conditions will be required to meet, specially concerning the regularity of the evolution. It is the theory of concentration of measure that allows in Hamilton-Randers theory to provide a resolution of the measurement problem of quantum mechanics (see chapter 6 in \cite{Ricardo2014}).

It is also the application of concentration of measure that will be the key ingredient to establish the existence of a domain of the fundamental dynamics sharing a property analogous to the weak equivalence principle of gravitational theories: in such domain of the fundamental dynamics, fixed the initial conditions, an specific type of dynamical variables, namely, geometric center of masses associated to the fundamental description of quantum systems that we propose\cite{Berger2002}, follow exactly the same evolution in time, independently of the size, type and mass of the system. This is expressed in the form of exponential Chernov type bounds on the separation of the center of mass trajectories, where the large (negative) makes the differences between trajectories of different test particles practically zero for all purposes.

This interpretation of a formal weak equivalence principle of the fundamental dynamics together with other analogous properties that the fundamental dynamics should have in such domain, leads to a natural identification of such interaction with gravity.

According with the above reasoning, two main consequences follow. The first is that the weak equivalence principle is indeed an exact law of nature, in a similar way as the principles of thermodynamics are unavoidable and universal and there is no violations of it. This consequence is true for all the scales in nature up to a critical scale, where the description pass from $N$ to $1$ fundamental degrees of freedom. At that point, the consequences of concentration of measure are not valid (there is no concentration). This implies a sudden violation of the weak equivalence principle at such scale, while it is for all practical purposes preserved to any other effective scale. That is, we predict that the equivalence principle will not show up any violation until it is suddenly and abruptly violated at the fundamental scale. Second, according to our theory, the gravitational interaction is a classical, emergent interaction. It is classical because it appears in the domain where concentration is present. Such domain coincides with dynamical domain  where, according to Hamilton-Randers theory, the value of all possible observables of the system are is well defined (see for instance, chapter 6 in \cite{Ricardo2014}). It is emergent because the variables involving in the interaction that is described are average variables (geometric center of mass).
Gravity, as such, does not apply to the fundamental degrees of freedom of Hamilton-Randers theory.

The structure of this work is as follows. In section 2, we introduce the notions from Hamilton-Randers dynamical systems that are relevant for the purpose of this paper. In section 3 we discuss a technical result, the decomposition of the Hamiltonian function in matter part and a non-matter part. Section 4 discuss how concentration of measure implies a formal weak equivalence principle. Section 5 shows that Newtonian gravity holds the fundamental property need for concentration, namely, being a $1$-Lipschitz interaction. Section 6 extends the results by arguing that, by formal similarities of the dynamics of Hamilton-Randers models with the characteristic properties of gravity in relativistic theories of gravitation, that such interaction must be indeed be the gravitational interaction. In Appendix A we collect several notions of Hamilton-Randers theory, while in Appendix B we describe the notion of concentration of measure and several results of particular practical interest for our purposes.

\section{Brief introduction to Hamilton-Randers theory}\label{Introduction Hamilton-Randers spaces}

The theory of Hamilton-Randers models proposes that the quantum description of physical systems is an effective description, obtained as a coarse grained approximation of a deterministic dynamics for fundamental degrees of freedom \cite{Ricardo06,Ricardo2014}. The fundamental assumption is that any single quantum system, as it can be an isolated electron, an atomic system or any system that effectively described by a quantum model, is at deeper level described by a dynamical system with many deterministic, local degrees of freedom of a type that we have called Hamilton-Randers models. This clash with the experimental violation of Bell-like inequalities, that imposes constrains on the admissible hidden variable theories \cite{Bell}. The way these constrains are overcome is through a mechanism of interacting at the fundamental level by means of an hypothesized fundamental dynamics at the level of the fundamental degrees of freedom \cite{Ricardo06,Ricardo2014,Ricardo2017}. Essentially, the theory of emergent quantum mechanics that we propose is a new theory of dynamics and time evolution, combined with an application of an extended version of Koopman-von Neumann theory of dynamical systems \cite{Koopman1931,von Neumann,ReedSimonI}.

Without going into the many mathematical details of the theory, that the interested reader can find in  \cite{Ricardo2014}, but that we remind is still a work in progress, we resume here the essential points that are strictly relevant for the topic of this paper and that we believe will remain immune to future developments of the theory.
The fundamental assumptions in our approach to emergent quantum mechanics are the following:
\begin{enumerate}
\item  There is an underlying deterministic, local and universal dynamics beneath quantum dynamical systems,

\item The dynamics is two dimensional. Namely, the time parameters need to describe the dynamics are two dimensional pairs $(t,\tau)$ that live either in a $2$-dimensional number field $\mathbb{K}$ or in a product $\mathbb{K}_1\times \,\mathbb{K}_2$ of number fields.

\item  The number of degrees of freedom of such fundamental dynamics $(U_t,U_\tau)$ to the quantum dynamics is in relation $N:1$, with $N>>1$. Thus, for instance, a single electron or neutrino should be described by models with $N>>1$ degrees of freedom and the ration $N:1$ is a {\it measure} of the complexity of the system. The models describe the dynamics of the system in an abstract configuration tangent manifold.

\item The internal dynamics $U_t$ (also identified with what we have called fundamental dynamics and fundamental flow) is assumed to have three different regimes, that are approximately cyclically repeated: 1. Ergodic phase, 2. Contractive phase and 3. Expanding phase. The relevant phase for the probabilistic, non-local description of quantum mechanics is the ergodic phase; the relevant phase for the topic of this work is the contractive phase and the equilibrium phase. These cycles are referred as fundamental cycles.

\item A sufficient condition for this contractive phase to happen is that there is a break of the ergodicity and that the evolution operator $U_t$  is dominated by a $1$-Lipschitz evolution operator. Then by application of concentration of measure \cite{MilmanSchechtman2001,Talagrand}, one can show that for such contractive dynamics a week principle of equivalence holds universally in the contractive and equilibrium phase.
\end{enumerate}

\subsection{The geometric framework}
The configuration space in Hamilton-Randers theory are defined on products of smooth manifolds,
\begin{align}
M\cong \,\prod^N_{k=1}\,  M^k_4.
\label{manifoldM}
\end{align}
The equivalence relation $\cong $ means diffeomorphism equivalence.
Therefore, for the dynamical systems that we shall consider, the configuration manifold $\mathcal{M}$  is the tangent space of the smooth manifold $M$ and is of the form
\begin{align*}
\mathcal{M}\cong\,TM\cong \,\prod^N_{k=1}\, TM^k_4.
\end{align*}
However,  since the dynamics will be described by a Hamiltonian formalism, the relevant geometric description of the dynamics is through the co-tangent bundle $\pi:T^*TM\to TM$.

The dimension of the configuration manifold $\mathcal{M}$ is
\begin{align*}
dim(\mathcal{M})=\,dim({TM})=\, 2 \,dim (M)=\,8\,N.
 \end{align*}
For the dynamical systems that we are interested, we assume that the dimension $dim(TM)=8N$ is large for all practical purposes compared with $dim(TM_4)=8$.

As we will see, choosing the configuration space $\mathcal{M}$ as a tangent space $TM$ instead than the base manifold $M$ allows to implement geometrically second order differential equations for the coordinates of the sub-quantum molecules in a straightforward way as differential equations defining vector fields on $TM$.

 The canonical projections are the surjective maps
 \begin{align*}
 \pi_k:TM^k_4\to M^k_4.
 \end{align*}
  The vertical fiber over $x_k\in \,M^k_4$ is $\pi^{-1}_k(x_k)\,\subset TM^k_4$.
  It will also be relevant to introduce the co-tangent spaces $T^*TM^k_4$ and the projections
  \begin{align*}
  proj_k :T^*TM^k_4 \to TM^k_4,
  \end{align*}
as well as the projection
  \begin{align*}
  proj :T^*TM_4 \to TM_4.
  \end{align*}

Each manifold  $M^k_4$ is diffeomorphic to the model manifold $M_4$.
 The assumed diffeomorphisms in the theory are maps of the form
 \begin{align}
 \varphi_k: M^k_4 \to M_4,\quad k=1,...,N.
 \label{diffeomorphism conditions}
 \end{align}
 \subsection{Dynamics of the sub-quantum degrees of freedom}
The dynamical systems are of the form
\begin{align}
\begin{split}
&\dot{u}^i :=\frac{d u^i}{d t}=\,\beta^i(u),\\
 &\dot{p}_i:=\frac{d p_i}{d t}=\,-\sum^{8N}_{k=1}\,\frac{\partial \beta^k(u)}{\partial u^i}p_k,\quad i,k=1,...,8N,
\end{split}
\label{dynamicalsystem}
\end{align}
where the time derivatives are taken respect to the $t$-time parameter and determines the $U_t$ dynamics.

The on-shell constraints
\begin{align}
\dot{x}^i=\,y^i,\quad i=1,2,...,N.
\end{align}
are also imposed.  One also imposes the Randers condition,
\begin{align}
|\beta^i|<1, \,\quad i=1,...,2N,
\label{randerscondition}
\end{align}
which is equivalent to the existence of upper bounds for acceleration and velocity of the fundamental degrees of freedom.

In Hamilton-Randers theory, it is postulated that the fundamental dynamics is described by a geometric flow $U_t$ in the phase space of $M$ \cite{Ricardo2014}. The details are relevant, but still under study. However, in order to describe how the $U_t$ implies the differential equations \eqref{dynamicalsystem}, such details can be omitted here. Only it is necessary to note that, in order to accomplish with the general structure of the dynamical cycles, the Hamiltonian function can be formally express in the form
\begin{align}
H_t(u,p)=\,(1-\kappa(t,\tau))^{1/2}\,\sum^{8N}_{k=1}\,\beta^k(u)p_k,
\label{Hamiltonianfunction}
\end{align}
$\kappa(t,\tau)$ is a function that regulates the flow $U_t$. In particular, its evolution determines and defines the time parameter $\tau$ associated with the fundamental cycles.
By the assumptions of the theory during each of the fundamental cycles there is a contractive regime such that the $U_t$ dynamics is $1$-Lipschitz continuous in some operator norm. In such regime the conditions
\begin{align}
\lim_{t\to (2n+1)T}\, H=0,\quad \lim_{t\to (2n+1)T}\,\left(1-\kappa(t)\right)=0, \quad n\in\,\mathbb{Z}
\label{equilibriumlimit}
\end{align}
hold good and the dynamics is manifestly $\tau$-time diffeomorphic invariant, since the Hamiltonian function \eqref{Hamiltonianfunction} is zero or close to zero in such regime of the dynamics.
The second of these properties suggests the introduction of the domain ${\bf D}_0\subset \,T^*TM$ containing the points corresponding to the evolution at the instants
 \begin{align}
\{t=\,n\,T,\,n\in\mathbb{Z}\}.
\label{metaestabilitypoints}
  \end{align}
The open domain ${\bf D}_0$ will be called the {\it metastable equilibrium domain}.
 \subsection{Re-definition of the $t$-time parameter and $U_t$ flow}
 The $U_t$ flow has been parameterized by the conformal factor $\kappa(t,\tau)$. However, in order to obtain dynamical equations of motion \eqref{dynamicalsystem} from a Hamiltonian theory it is necessary to conveniently normalize the $t$-time parameter. Let us denote the {\it old}  time parameter, by $\tilde{t}$ and the new external time parameter by $t$.
 In order to resolve such incompatibility, we re-define
 \begin{align}
 \tilde{t}\mapsto t=\,\tilde{t}\,(1-\tilde{\kappa}(\tilde{t},\tau))^{-1}.
 \label{redefinitionoftau}
 \end{align}
  As it stands, the relation \eqref{redefinitionoftau}  is well defined, since $\kappa(\tilde{t},\tau)$ does not depend on $u\in TM$.
On the other hand, one has the differential expression
 \begin{align*}
 {d{t}}=\,(1-\tilde{\kappa}(\tilde{t},\tau){d\tilde{t}}+
 \tilde{t}{d}\left(\frac{1}{1-\tilde{\kappa}(\tilde{t},\tau)}\right).
 \end{align*}

  It is reasonable that respect to the internal time we can further impose on the equilibrium domain ${\bf D}_0$
 \begin{align*}
 \frac{d\tilde{\kappa}}{d\tilde{t}}=0,
 \end{align*}
 indicating homogeneity respect to $\tilde{t}$-time translation. We remark that this condition is imposed only on ${\bf D}_0$, that is, the above derivative is negligible in that region and zero at the points $t=\,2nT$, which is also the domain defining the $\tau$-time parameter. Then we have that on ${\bf D}_0$ the relation
 \begin{align*}
  {d{t}}=\,(1-\tilde{\kappa}(\tau)){d\tilde{t}}
  \end{align*}
  holds good.
The effective Hamiltonian that describe the dynamics the full cycle is given by
\begin{align}
\nonumber H_t(u,p) &= \sum^{8N}_{k=1}\, \beta_i(u)\,p^i, & t\neq n\,T,\,n\in\, \mathbb{Z},\\
& =\,0, &  t= n\,T,\,n\in\, \mathbb{Z}.
\label{randershamiltoniant}
\end{align}

\subsection{Koopman-von Neumann theory}
We apply Koopman-von Neumann theory \cite{Koopman1931, von Neumann} to Hamilton-Randers systems in the following way. First, we introduce the following quantization prescription, which is indeed an statement on the existence of a non-commutative Lie algebra,
\begin{align}
(u,p)\mapsto (\hat{u},\hat{p}),\quad
[\hat{u}^i,\hat{u}^j]=0,\quad [\hat{p}_i,\hat{p}_j]=0,\quad [\hat{u}^i,\hat{p}_j]=\,\delta^i_{j},\quad i,j=1,...,8N.
\label{quantumalgebra}
\end{align}
This quantization prescription is not equivalent to the standard canonical quantization of  quantum mechanics, since in our case half of the $u$-coordinates represent velocity coordinates. After quantization, all the  $\hat{u}$-velocity $\{\hat{y}^i\}^{4N}_{i=1}$ elements of the quantum algebra \eqref{quantumalgebra} commute with the $\hat{u}$-position coordinate operators $\{\hat{x}^i\}^{4N}_{i=1}$. This situation contrasts with the usual canonical quantization used in quantum theory, where the canonical position operators and the velocity operators do not commute. The fact that the dynamics of sub-quantum degrees of freedom is deterministic is not in contradiction with the uncertainty principle in quantum mechanics, since the degrees of freedom represented by points $u\in TM$ are not the same than the quantum degrees of freedom used in the quantum mechanical description of the system.

 Let us consider a linear representation of the algebra \eqref{quantumalgebra} on a vector space $\mathcal{H}_{F}$. A generation set  $\{|u\rangle =\,|x,y\rangle\}\subset \mathcal{H}_F$ of eigenstates for the $\hat{u}$ operators is also introduced. Then the  vector space $\mathcal{H}_{F}$ can be furnished with a natural scalar product in such a way that $\mathcal{H}_{F}$ with such scalar product is a pre-Hilbert space. We assume that indeed $\mathcal{H}_F$ is a Hilbert space. We can state that the operators $\{\hat{u}^\mu_k\}$ and $\{p^\mu_k\}$ are self-adjoint to respect the inner product in $\mathcal{H}_F$. Furthermore, the eigenvalues of $\{\hat{u}^\mu_k\}$ are continuous and compatible with the atlas structure of $TM$.

 The eigenvectors of $\{\hat{u}^\mu_k\}$ are such that
 \begin{align*}
 \hat{x}^\mu_k\,|x^\mu_k,y^\mu_k\rangle=\,x^\mu\,|x^\mu_k,y^\mu_k\rangle,\quad  \hat{y}^\mu_k\,|x^\mu_k,y^\mu_k\rangle=\,y^\mu\,|x^\mu_k,y^\mu_k\rangle.
 \end{align*}
In terms of the generator system of $\mathcal{H}_F$
\begin{align*}
\{|u^\mu\rangle\}^N_{k=1}\equiv\left\{|x^\mu_k,y^\mu_k\rangle,\, N=1,...,N,\, \mu=0,1,2,3\right\}
\end{align*}
a generic element of $\mathcal{H}_F$ is of the form
\begin{align}
\psi(x)=\,\sum^N_{k=1}\,\frac{1}{\sqrt{N}}\,\int_{T_xM^k_4}\,d^4z_k \,e^{\imath \varphi_k(x_k,z_k)}\,n_k(x^\mu_k,z^\mu_k)\,|x_k,z_k\rangle,
\label{quantumstates}
\end{align}
where in the following $|x_k,z_k\rangle$ stands for $|x^\mu_k,z^\mu_k\rangle$. $\vartheta_k:\mathcal{M}^k_4\to M_4$ are diffeomorphisms and we denote by $\vartheta^{-1}_k(x)=x_k$, etc...
 Note that the velocity coordinates are integrated. This is an expression of the ergodicity in one of the dynamical regimes of the $U_t$ dynamics. The ergodic theorem is only applied respect to the speed coordinates and not respect to the spacetime coordinates.

 The spacetime coordinates are labels for the degrees of freedom. The choice of the type of label is irrelevant. Hence, the theory must be invariant under diffeomorphism invariance in $M_4$. This invariant condition is totally consistent with Hamilton-Randers theory.

 The Hamiltonian operator $\widehat{H}$ is obtained from  the Hamiltonian function \eqref{Hamiltonianfunction} by application of Born-Jordan quantization prescription to the algebra \eqref{quantumalgebra}. The quantized Hamiltonian $\widehat{H}$ defines the $U_\tau$ dynamics through the corresponding Heisenberg equation. In the equilibrium domain $\widehat{H}=0$, that can be read in a weak way as $\widehat{H}\psi=0$, where $\psi$ describes a quantum state of the system.

 \section{Lower bound for the energy level of the matter Hamiltonian}\label{Lower bound for the Hamiltonian} Koopman-von Neumann formalism provides the fist step towards an unified view of classical and quantum dynamical systems. However, as it was emphasized by Hooft, to re-formulate classical dynamics from a quantum mechanical point of view, there is the fundamental problem of finding a bounded from below Hamiltonian for matter or a mechanism with the same effects. In this section we show that a natural mechanism exists in Hamilton-Randers theory that renders the Hamiltonian function for matter bounded. The treatment of density Hamiltonian functions could be traced in similar lines as here.
 
   Let us consider the decomposition of the Hamiltonian operator $\widehat{H}_t$ given by \eqref{Hamiltonianfunction} in a $1$-Lipschitz component $\widehat{H}_{Lipschitz,t}$ and a non-Lipschitz component $\widehat{H}_{matter,t}$,
   \begin{align}
   \widehat{H}_t(\hat{u},\hat{p})=\,\widehat{H}_{matter,t}(\hat{u},\hat{p})\,
   +\widehat{H}_{Lipschitz,t}(\hat{u},\hat{p}).
   \label{decompositionoftheHamiltonianHt}
   \end{align}
The {\it matter Hamiltonian} is defined in this expression by the piece of the Hamiltonian operator  which is not $1$-Lipschitz. This is consistent  with the idea that matter (including gauge interactions) is {\it quantum matter}.

 \begin{lema} \label{lemaon1lipschitz}
    Let $H_t:T^*TM\to \mathbb{R}$ be a $\mathcal{C}^2$-smooth Randers Hamiltonian function \eqref{Hamiltonianfunction}. Then there exists a compact domain $K'\subset\,T^*TM$ such that the restriction $H|_{K'}$ is $1$-Lipschitz continuous.
   \end{lema}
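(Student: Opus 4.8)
The plan is to control the gradient of $H_t$ on a suitable neighbourhood and then to invoke the mean value inequality. Since $H_t$ is $\mathcal{C}^2$, its gradient $\nabla H_t$ is $\mathcal{C}^1$ and hence continuous on $T^*TM$. Writing $\lambda(t,\tau):=(1-\kappa(t,\tau))^{1/2}$, a factor that does not depend on the phase space point, the Hamiltonian \eqref{Hamiltonianfunction} has partial derivatives
\begin{align*}
\frac{\partial H_t}{\partial p_i}=\lambda\,\beta^i(u),\qquad
\frac{\partial H_t}{\partial u^i}=\lambda\sum_{k=1}^{8N}\frac{\partial \beta^k}{\partial u^i}(u)\,p_k .
\end{align*}
The first observation is that on the zero section $\{p=0\}$ the position derivatives vanish and the momentum derivatives reduce to $\lambda\,\beta^i(u)$, so that $\|\nabla H_t(u,0)\|=\lambda\,\|\beta(u)\|$. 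The Randers condition \eqref{randerscondition} ensures $\|\beta(u)\|<1$ in the norm intrinsic to the Randers structure, and since $0\le\kappa\le 1$ gives $\lambda\le 1$, one obtains $\|\nabla H_t(u,0)\|<1$ at every point of the zero section.

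First I would fix a point $u_0\in TM$ together with a compact coordinate ball $\bar{B}_u\ni u_0$; by continuity and compactness one has $\sup_{\bar{B}_u}\|\beta\|=:C_1<1$ and $\sup_{\bar{B}_u}\|D\beta\|=:C_2<\infty$, where $D\beta$ denotes the Jacobian of the functions $\beta^k$. On the product domain $K'=\bar{B}_u\times\{\|p\|\le r\}$ the estimate
\begin{align*}
\|\nabla H_t\|^2\le\lambda^2\big(C_2^2\,r^2+C_1^2\big)\le C_2^2\,r^2+C_1^2
\end{align*}
then holds, and the right hand side does not exceed $1$ as soon as $r\le\sqrt{1-C_1^2}\,/\,C_2$. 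With such a radius $K'$ is compact, being a product of closed balls, and satisfies $\|\nabla H_t\|\le 1$ throughout. Since $\bar{B}_u$ and the momentum ball $\{\|p\|\le r\}$ are convex in the chart, the mean value inequality along line segments yields $|H_t(z_1)-H_t(z_2)|\le d(z_1,z_2)$ for all $z_1,z_2\in K'$, which is the asserted $1$-Lipschitz continuity of $H_t|_{K'}$.

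The step I expect to be the main obstacle is the passage from the Randers condition \eqref{randerscondition} to a genuine norm bound $\|\beta\|<1$. Written componentwise, $\sum_i|\beta^i|^2$ could grow with the dimension $8N$, so the naive Euclidean norm of $\beta$ on the zero section need not lie below unity for a general value of $\lambda$; one must therefore work with the co-Finsler norm intrinsic to the Randers metric, for which the Randers condition is precisely the statement that $\beta$ has norm less than one. A secondary point is to ensure that the gradient is measured in the same metric that defines the Lipschitz distance and that $K'$ can be taken convex, so that the mean value inequality applies; both are arranged by carrying out the estimate inside a single chart adapted to the Randers structure.
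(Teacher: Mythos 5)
Your proof is correct in substance but follows a genuinely different route from the paper's. The paper expands $H_t$ in a second order Taylor series about a point $(\xi,\chi)$, bounds the quadratic remainder on a sufficiently small compact set so that it is dominated by $\tfrac12\sum_k|u^k(1)-u^k(2)|+\tfrac12\sum_k|p_k(1)-p_k(2)|$, bounds the first-order coefficients $\partial H_t/\partial u^{k}$ by a constant $C_U$, and concludes that $H_t|_{K'}$ is Lipschitz with some constant $M=\max\{\tfrac12,\tilde C_U\}$; it then \emph{rescales} the Hamiltonian by $1/M$ (arguing that this does not alter the equations of motion or the Randers condition) in order to reach the constant $1$. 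You instead bound the gradient directly and shrink only the momentum radius $r$ so that $\|\nabla H_t\|\le 1$ holds without any rescaling, and then apply the mean value inequality on the convex product of closed balls. Your route is cleaner and proves the statement as literally phrased --- for $H_t$ itself rather than for a renormalized copy --- at the price of needing $\sup_{\bar B_u}\|\beta\|<1$, which, as you correctly flag, does not follow from the componentwise condition $|\beta^i|<1$ in the Euclidean norm when $8N$ is large, but does follow from the intrinsic formulation $|\eta^*(\beta,\beta)|<1$ of Appendix A provided the gradient and the Lipschitz distance are measured in the same metric adapted to the Randers structure; the paper sidesteps this issue because its final rescaling absorbs any finite bound on the $\beta^i$ into $M$. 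Both arguments share the same level of informality about which metric on $T^*TM$ defines ``$1$-Lipschitz'' (the paper implicitly uses an $\ell^1$-type distance in the chart), so neither is strictly more rigorous on that point; but if one insists that it is the restriction of the \emph{original} Hamiltonian that must be $1$-Lipschitz, your argument delivers that conclusion directly, whereas the paper's delivers it only after the normalization $H_t\mapsto H_t/M$.
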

   \begin{proof}
   By Taylor's expansion at the point $(\xi,\chi)\in T^*TM$ up to second order  one obtains the expressions
   \begin{align*}
   H_t(u,p) & =\,H_{t}+\,\sum^{8N}_{k=1}\,\frac{\partial H_t}{\partial u^{k}}|_{({\xi},\chi)}(u^k-{\xi}^k)+\,\sum^{8N}_{k=1}\,\frac{\partial H_t}{\partial p_{k}}|_{({\xi},\chi)}(p_k-\chi_k)\\
   & +\,\sum^{8N}_{k=1}\,R_{k}\,(u^k-{\xi}^k)^2+\,\sum^{8N}_{k=1}\,Q_{k}\,(p_k-\chi_k)^2
   \end{align*}
   where the term
   \begin{align*}
   \sum^{8N}_{k=1}\,R_{k}\,(u^k-{\xi}^k)^2+\,\sum^{8N}_{k=1}\,Q_{k}\,(p_k-\chi_k)^2
   \end{align*}
   is the remaind term of the second order Taylor's expansion.
   The difference for the values of the Hamiltonian $H_t$ at two different points is given by the expression
   \begin{align*}
   &|H_t(u(1),p(1))-\,H_t(u(2),p(2))|=\Big|\sum^{8N}_{k=1}\,\frac{\partial H_t}{\partial u^{k}}|_{({\xi},\chi)}(u^k(1)-{\xi}^k)\\
   &+\,\sum^{8N}_{k=1}\,\beta^k({\chi})\,(p_k(1)-\chi_k) +\sum^{8N}_{k=1}\,R_{k}(1)\,(u^k(1)-{\xi}^k)^2+\,\sum^{8N}_{k=1}\,Q^{k}(1)\,(p_k(1)-{\xi}_k)^2\\
   & -\sum^{8N}_{k=1}\,\frac{\partial H_t}{\partial u^{k}}|_{({\xi},\chi)}(u^k(2)-{\xi}^k)-\,\sum^{8N}_{k=1}\,\beta^k({\chi})\,(p_k(2)-\chi_k)\\
   & -\sum^{8N}_{k=1}\,R_{k}(2)\,(u^k(2)-{\xi}^k)^2-\,\sum^{8N}_{k=1}\,Q^{k}(2)\,(p_k(2)-\chi_k)^2\Big|\\
   & \leq \big|\sum^{8N}_{k=1}\,\frac{\partial H_t}{\partial u^{k}}|_{({\xi},\chi)}(u^k(1)-u^k(2))\big| +\,\big|\sum^{8N}_{k=1}\,\beta^k(\chi)(p_k(1)-p_k(2))\big|\\
   & +\big|\sum^{8N}_{k=1}\,R_{k}(1)\,(u^k(1)-{\xi}^k)^2-\,R_{k}(2)\,(u^k(2)-{\xi}^k)^2\big|\\
   & +\big|\sum^{8N}_{k=1}\,Q^{k}(1)\,(p_k(1)-\chi_k)^2-\,Q^{k}(2)\,(p_k(2)-\chi_k)^2\big|.
   \end{align*}
  Due to the continuity of the second derivatives of $H_t$, for each compact set $K\subset \,T^*TM$ containing the points $1$ and $2$, there are two constants  $C_R(K)>0$ and $C_Q(K)>0$ such that $|R_{k}(1)|,|R_{k}(2)|<C_R(K)$ and $|Q_{k}(1)|,|Q_{k}(2)|<C_Q(K)$, for each $k=1,...,8N$. Moreover, as a consequence of Taylor's theorem it holds that
    \begin{align*}
    \lim_{1\to 2}\,C_Q(K)=0,\quad\lim_{1\to 2}\,C_R(K)=0,
    \end{align*}
Since $K$ is compact the last two lines in the difference $|H(u(1),p(1))-\,H(u(2),p(2))|$ can be rewritten as
\begin{align*}
&\big|\sum^{8N}_{k=1}\,R_{k}(1)\,(u^k(1)-{\xi}^k)^2-\,R_{k}(2)\,(u^k(2)-{\xi}^k)^2\big|
\,\leq C_R(K)\big|\sum^{8N}_{k=1}(u^k(1)-u^k(2))^2\big|\\
   & \big|\sum^{8N}_{k=1}\,Q^{k}(1)\,(p_k(1)-\chi_k)^2-\,Q^{k}(2)\,(p_k(2)-\chi_k)^2\big|\,\leq
  C_Q(K) \big|\sum^{8N}_{k=1}\,(p_k(1)-p_k(2))^2\big|.
\end{align*}
The constants $C_Q(K)$ and $C_R(K)$ can be taken finite on $K$.
    Furthermore,  by further restricting the domain where the points $1$ and $2$ are to be included in a smaller compact set $\tilde{K}$, one can write the following relations,
    \begin{align}
   C_R(\tilde{K})|(u^k(1)-u^k(2))|\leq 1/2,\,\quad C_Q(\tilde{K})|(p_k(1)-p_k(2))|\leq 1/2.
\label{boundoftheconstant}
    \end{align}
Let us consider the  further restriction on the compact set ${K}'\subset \tilde{K}\,\subset\, T^*TM$ such that for each $({\xi},\chi)\in\,{K}'$
\begin{align}
\big|\frac{\partial H_t}{\partial u^{k}}|_{({\xi},\chi)}\big|\leq C_U,\,k=1,....,4N
\label{boundinacceleration}
\end{align}
holds good for some constant $C_U$. Also, on ${K}'$ it must hold that
\begin{align*}
& C_R(K)\big|\sum^{8N}_{k=1}(u^k(1)-u^k(2))^2\big|+\,  C_Q(K) \big|\sum^{8N}_{k=1}\,(p_k(1)-p_k(2))^2\big|\\
& \leq \,1/2\,\sum^{8N}_{k=1}\big|(u^k(1)-u^k(2))\big|+ 1/2\,\sum^{8N}_{k=1}\,\big|(p_k(1)-p_k(2))\big|.
\end{align*}
Moreover, the factors $|\beta^i|$ are bounded as a consequence of Randers condition \eqref{randerscondition}.
As a consequences of these inequalities and relations we have that on the compact set $K'\subset \,T^+TM$ the relation
\begin{align*}
&|H(u(1),p(1))-\,H(u(2),p(2))|\big|_{{K}'}\leq  \,\tilde{C}_U\,\Big(\sum^{8N}_{k=1}\big|(u^k(1)-u^k(2))\big|\\
&+\,\sum^{8N}_{k=1}\,\,\big|\,(p_k(1)-p_k(2))\big|\Big) + 1/2\,\sum^{8N}_{k=1}\big|(u^k(1)-u^k(2))\big|+ 1/2\,\sum^{8N}_{k=1}\,\big|(p_k(1)-p_k(2))\big|
\end{align*}
with $\tilde{C}_U=\,\max \{C_U,1\}$ holds good.
This proves that $H|_{K'}$ is a Lipschitz function, with Lipschitz constant $\widetilde{\tilde{C}}_U =\max\{\frac{1}{2},\tilde{C}_U\}$, which is necessarily finite. Now we can redefine the Hamiltonian dividing by $\widetilde{\tilde{C}}_U$, which is a constant larger than $1$. This operation is equivalent to redefine the vector field $\beta\in \,\Gamma TTM$. Such operation  does not change the equations of motion and the Randers condition \eqref{randerscondition} on the compact set $K'$. Then we  obtain a $1$-Lipschitz Hamiltonian function on $K'$, as a restriction and constant re-scaling of the original Hamiltonian $H_t$.
    \end{proof}

For the hypothesized Hamilton-Randers dynamics, the compact domain $K'$ is not empty. In the metaestable domain ${\bf D}_0$, the Hamiltonian $H_t$ is equivalent to zero. Therefore, it is reasonable to think that in such domain $H_t$ is Lipschitz, thus providing an example where $K'$ can be contained (${\bf D}_0$ is not necessarily compact). This argument also shows that for the case of Hamilton-Randers dynamics $K'$ is not a discrete set.

Extensions from $K'$ to the whole phase space $T^*TM$ can be constructed as follows. Consider the Sasaki metric on $T^*TM$ of the Hamilton-Randers structure (see \eqref{DefinicionHR} in Appendix \ref{Hamilton-Randers structures}). For every observer $W$ one can associate by canonical methods a Finsler metric on $T^*TM$ and then an asymmetric distance function
   \begin{align*}
   \varrho_S:T^*TM\times T^*TM\to \mathbb{R}.
   \end{align*}
   Let us consider the projection on $K'$
   \begin{align}
   \pi_{K'}: T^*TM\to K',\,(u,p)\mapsto (\bar{u},\bar{p}),
   \end{align}
  where $(\bar{u},\bar{p})$ is defined by the condition that the distance from $(u,p)$ to $K'$ is achieved at $(\bar{u},\bar{p})$ in the boundary $\partial K'$. Then one defines the {\it radial decomposition} of $H_t$ by the expression
   \begin{align}
   H_t(u,p)=\,R\big(\varrho_S((u,p),(\bar{u},\bar{p}))\big)\,H_t(\bar{u},\bar{p})+\,\delta H_t(u,p).
   \label{decompositionofH}
   \end{align}
   The positive function $R\big(\varrho_S((u,p),(\bar{u},\bar{p}))\big)$ is such that $H_t(\bar{u},\bar{p}$, which is the first piece of the re-scaled Hamiltonian, is $1$-Lipschitz by lemma \ref{lemaon1lipschitz}. The second contribution is not $1$-Lipschitz. By assumption, $\delta H_t(u,p)$ is identified with the matter Hamiltonian $H_{matter}$,
    \begin{align}
    H_{matter, t}(u,p):=\,\delta H_t(u,p).
    \label{matterhamiltonian}
    \end{align}
With these redefinitions we obtain the following
   \begin{teorema}
    Every Hamiltonian  \eqref{Hamiltonianfunction} admits a normalization such that the decomposition \eqref{decompositionoftheHamiltonianHt} holds globally on $T^*TM$.
   \label{radialdecomposition}
   \end{teorema}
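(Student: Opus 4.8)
The plan is to upgrade the local statement of Lemma \ref{lemaon1lipschitz} to a global one by verifying that the splitting already written down in the radial decomposition \eqref{decompositionofH} does the job. First I would fix the compact set $K'$ furnished by Lemma \ref{lemaon1lipschitz}, on which (after the same normalization used there, namely division by the finite Lipschitz constant $M$) the function $H_t$ is $1$-Lipschitz for the asymmetric distance $\varrho_S$ of the Sasaki--Finsler structure. The candidate Lipschitz component is then $H_{Lipschitz,t}(u,p):=R\big(\varrho_S((u,p),(\bar u,\bar p))\big)\,H_t(\bar u,\bar p)$ with $(\bar u,\bar p)=\pi_{K'}(u,p)$, and the matter component is the residual $H_{matter,t}:=\delta H_t=H_t-H_{Lipschitz,t}$ as in \eqref{matterhamiltonian}. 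The entire content of the theorem is the assertion that $R$ together with a global normalization constant can be chosen so that $H_{Lipschitz,t}$ is $1$-Lipschitz on all of $T^*TM$, which then makes \eqref{decompositionoftheHamiltonianHt} hold globally.

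The steps I would carry out, in order, are the following. First I would show that $\pi_{K'}$ is well defined, i.e.\ that for each $(u,p)$ the $\varrho_S$-distance to $K'$ is attained at a point of $\partial K'$, adopting the convention $\pi_{K'}=\mathrm{id}$ on $K'$ itself. Second I would record that the distance-to-$K'$ function $d(u,p):=\varrho_S\big((u,p),\pi_{K'}(u,p)\big)$ is $1$-Lipschitz, since the distance to a fixed set is always nonexpansive. Third I would observe that $g:=H_t\circ\pi_{K'}$ is \emph{bounded} on $T^*TM$ (its range is contained in the compact $H_t(\partial K')$) and Lipschitz with some finite constant $L$, using that $\pi_{K'}$ is nonexpansive and that $H_t|_{K'}$ is $1$-Lipschitz. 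Fourth I would pick a smooth, positive, nonincreasing $R$ with $R(0)=1$ and $R,R'$ bounded, so that the Leibniz estimate $|\nabla(Rg)|\le \sup|R'|\,\sup|g|+\sup|R|\,L$ (with $|\nabla d|\le 1$) gives a finite global Lipschitz constant for $H_{Lipschitz,t}$; note that $R(0)=1$ and $\pi_{K'}=\mathrm{id}$ on $K'$ force $H_{Lipschitz,t}|_{K'}=H_t|_{K'}$ and hence $H_{matter,t}|_{K'}=0$, consistent with the interpretation of matter as the non-Lipschitz sector and with the vanishing of $H_t$ on ${\bf D}_0$. Finally I would divide the whole decomposition by this finite constant; exactly as in the proof of Lemma \ref{lemaon1lipschitz} the rescaling leaves the equations of motion and the Randers condition \eqref{randerscondition} untouched while making $H_{Lipschitz,t}$ exactly $1$-Lipschitz, which establishes \eqref{decompositionoftheHamiltonianHt} everywhere.

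The hard part will be the existence, single-valuedness and nonexpansiveness of the metric projection $\pi_{K'}$ used in the first three steps: in an \emph{asymmetric} Finsler setting the nearest-point projection onto a compact set is neither automatically unique nor automatically $1$-Lipschitz, unlike the Hilbert-space case. To close this gap I would either impose that $K'$ is geodesically convex together with a sign condition on the flag curvature of the Sasaki--Finsler metric (guaranteeing a unique nearest point and the nonexpansiveness invoked in the third step), or replace the exact metric projection by a smooth retraction onto $K'$ and carry the resulting extra Lipschitz factor through the fourth step, where it is harmless because it only rescales the final normalization constant. A secondary technical point is that $d$ fails to be differentiable on the cut locus of $\partial K'$; there I would read the Leibniz estimate in terms of the local Lipschitz slope of $Rg$ rather than a gradient, which avoids differentiability assumptions while still bounding the global Lipschitz modulus.
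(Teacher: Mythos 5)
Your proposal is correct in outline and follows the same basic strategy as the paper (use the radial decomposition \eqref{decompositionofH} built from the projection $\pi_{K'}$ and Lemma \ref{lemaon1lipschitz}), but the normalization step is genuinely different. The paper divides $H_t$ pointwise by the factor $R\big(\varrho_S((u,p),(\bar u,\bar p))\big)$, so that the Lipschitz piece collapses to the bare composition $H_t(\bar u(u),\bar p(u))=H_t\circ\pi_{K'}$, and then simply asserts that this is $1$-Lipschitz on all of $T^*TM$ because $H_t|_{K'}$ is. You instead keep the product $R\cdot(H_t\circ\pi_{K'})$, bound its global Lipschitz modulus by a Leibniz-type estimate using boundedness of both factors, and divide by a single global constant. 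Your route buys two things: a constant rescaling genuinely leaves the equations of motion and the Randers condition untouched (as in the proof of Lemma \ref{lemaon1lipschitz}), whereas the paper's position-dependent division by $R$ is a nontrivial conformal-type modification of the Hamiltonian; and, more importantly, you make explicit the hypothesis that both arguments silently need, namely that the metric projection $\pi_{K'}$ onto $K'$ exists, is single-valued, and is nonexpansive for the asymmetric distance $\varrho_S$. The paper's one-line proof inherits exactly this gap --- ``$H_t(\bar u(u),\bar p(u))$ is $1$-Lipschitz on $T^*TM$ since it is $1$-Lipschitz on $K'$'' is false without nonexpansiveness of $(u,p)\mapsto(\bar u,\bar p)$ --- so your proposed remedies (geodesic convexity of $K'$ plus a curvature condition, or a smooth retraction absorbed into the final constant) are not merely optional polish but supply a step the published argument also requires. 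The one point to be careful about is that after your global-constant rescaling the identity $H_t=H_{Lipschitz,t}+H_{matter,t}$ must be rescaled as a whole, so that the matter piece is $\delta H_t$ divided by the same constant; this is harmless but should be stated.
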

   \begin{proof}
   One can perform the normalization
   \begin{align*}
   H_t(u,p)\to \,&\frac{1}{R\big(\varrho_S((u,p),(\bar{u},\bar{p}))\big)}\,H_t(u,p)\\
& = H_t(\bar{u}(u),\bar{p}(u))+\, \frac{1}{R\big(\varrho_S((u,p),(\bar{u}(u),\bar{p}(u)))\big)} \delta H_t(u,p).
 \end{align*}
 The first term is $1$-Lipschitz in $T^*TM$, since $H_t(\bar{u}(u),\bar{p}(u))$ is $1$-Lipschitz on $K'$, while the second term is not $1$-Lipschitz continuous.
   \end{proof}
   The uniqueness of this construction depends upon the uniqueness of the compact set $K'$, the uniqueness of the relation $(u,p)\mapsto (\bar{u},\bar{p})$. One can consider the maximal set $K'$, but in general the construction is not unique.
\subsection{Hamiltonian constrain}
   From the properties of the $U_t$ flow it follows, after quantization, that
   \begin{align}
   \lim_{t\to (2n+1)T} \big(\widehat{H}_{matter,t}\,+\widehat{H}_{Lipschitz,t}\big)|\psi\rangle=0
   \label{Hamiltonian constrain}
   \end{align}
   must hold for each $|\psi\rangle\in\,\mathcal{H}_{F}$. This relation is identified with the quantum version of the Hamiltonian constrain in time reparametrization invariance in gravitational theories.
   However, each of the individual terms in this relation can be different from zero in the metastable domain ${\bf D}_0$,
   \begin{align*}
   \lim_{t\to (2n+1)T} \widehat{H}_{matter,t}|\psi\rangle\neq \,0
    \end{align*}
    and
    \begin{align*}
    \lim_{t\to (2n+1)T}\widehat{H}_{Lipschitz,t}|\psi\rangle\neq\,0.
   \end{align*}
   This implies that in order to have the metastable equilibrium point at the instant $t=(2n+1)T$, in addition with the matter Hamiltonian \eqref{matterhamiltonian}, an additional piece of dynamical variables whose described by the Hamiltonian $\widehat{H}_{Lipschitz,t}$ is needed. On the other hand,
if we assume that the matter Hamiltonian \eqref{matterhamiltonian} is positive definite when acting on physical states (weakly positive), then the $1$-Lipschitz Hamiltonian should be negative when acting on physical states. Hence for Hamilton-Randers dynamical models the positiveness of the matter Hamiltonian is extended to all $t\in [0,(2n+1)T]$.
This implies the consistency of the positiveness of the energy level for the quantum Hamiltonian for matter \eqref{matterhamiltonian} in the whole process of the $U_t$-evolution.
 On the other hand, $\widehat{H}_{Lipschitz,t}$ is related with the classical gravitational interaction.

One can reverse this argument. If $\widehat{H}_{Lipschitz,t}$ is weakly negative definite, then $\widehat{H}_{matter,t}$ must be weakly positive definite. This property is related with the analogous property of gravitational interaction, as we discuss below.

\section{Emergence of the weak equivalence principle in Hamilton-Randers theory}
We organize the contents of this section in two parts. In the first part, preparatory material will formalize the theory of center of mass coordinates for Hamilton-Randers systems and relate them with macroscopic free falling coordinates. In the second part, we will apply concentration of measure to such coordinate functions, deducing in a formal way the weak equivalence principle.
\subsection{Preliminary considerations}
Let us consider a physical system $\mathcal{S}$ that can be thought as composed by two sub-systems $A$ and $B$.
     We denote by  $X^\mu(\mathcal{S}_i),\,i\equiv \mathcal{S},A,B$ the {\it macroscopic observable} coordinates associated to the system $\mathcal{S}_i$, that is, the value of the coordinates that could be associated when local coordinates are assigned by a classical observer to each system $\mathcal{S},A,B$ by means of a measurements or by means of a given theoretical model.
Then we adopt the following
\bigskip

{\bf Assumption I}. The functions
\begin{align*}
\begin{split}
& X^\mu(\mathcal{S}_i):T^*TM\times \mathbb{R}\to \mathbb{R},\\& (u(1),...,u(N),p(1),...,p(N),t)\mapsto X^\mu(u(1),...,u(N),p(1),...,p(N),t)
\end{split}
\end{align*}
are smooth.

\bigskip

Under this hypothesis, that can be taken at the present stage of the theory as working hypothesis,
we will show that there is a natural collection of functions $X^\mu(t)$ such that
\begin{align*}
\tau\mapsto X^\mu(\tau):=\,X^\mu(\tau T)
\end{align*}
  are $1$-Lipschitz in $\tau$-time parameter, where $\tau=2n+1, \,n\in\mathbb{Z}$.

\bigskip

{\bf Assumption II}. In the metastable domain ${\bf D}_0$ the functions
\begin{align*}
\left\{u^\mu_k(t),\frac{d u^\rho_k}{dt},\,\mu=1,2,3,4;\, k=1,...,N\right\}
\end{align*}
 are  $\mathcal{C}^1$-regular functions.
\bigskip

Let us consider two  subsystems $A$ and $B$ of the full system under consideration $\mathcal{S}$. The sub-systems $A$, $B$ are embedded in $\mathcal{S}$ such that
\begin{align}
\mathcal{S}=\,A\sqcup B,
 \end{align}
 for an hypothetically  well defined union operation $\sqcup$ for systems composed by sub-quantum molecules. Let us consider local coordinate systems such that the identification
   \begin{align}
   A\equiv (u_1(\tau),...,u_{N_A}(\tau),0,...,0) \quad \textrm{and}\quad B\equiv (0,...,0,v_1(\tau),...,v_{ N_B}(\tau)),
   \label{embedding A,BtoS}
   \end{align}
   with $N=\,N_A+N_B,N_A,N_B\gg 1$ holds good.
The whole system $\mathcal{S}$ can be represented in local coordinates as
   \begin{align*}
   \mathcal{S}\equiv (u_1(t),...,u_{ N_A}(t),v_1(t),...,v_{ N_B}(t)).
   \end{align*}
    By the action of the diffeomorphisms $\varphi_k:M^k_4\to M_4$, one can consider the world lines of the sub-quantum molecules on $M_4$ at each constant value of $t$ modulo $2T$. In the particular case of metaestable equilibrium points $\{\tau\equiv \,t(n)=(2n+1)T,\,n\in\,\mathbb{Z}\}$ we have a set of (discrete) world lines in $M_4$, showing that the functions $\{X^{\mu}\}^4_{\mu=1}$ characterize the average presence of sub-quantum world lines at a given point of $M_4$. Therefore, it is reasonable to define the {\it observable coordinates of the system} by the expression
    \begin{align}
   \tilde{X}^\mu_i(\tau)\equiv \tilde{X}^\mu_i(t(n))=\frac{1}{N}\,\lim_{t\to (2n+1)T}\sum^{N_i}_{k_i=1}\,\varphi^\mu_{k_i}(x_{k_i}(t)),\quad\,i=A,B,\mathcal{S},\,\mu=0,1,2,3,
    \label{definicionofXmu}
    \end{align}
    where here $\varphi^\mu_{k_i}$ are local coordinates on $M_4$, defined after the action of the diffeomorphism $\varphi_{k_i}$. Note that the normalization factor $1/N$ is the same for all the systems $i=A,B,\mathcal{S}$. This means that we are considering systems that eventually are sub-systems (proper or improper) of a larger sub-system $\mathcal{S}$. This formal constraint is however harmless for general purposes by the embedding \eqref{embedding A,BtoS}. Actually, we can suppress the factor $N$, as long as we keep track in all the expressions below of the equivalent normalization criteria.

Then by the embedding \eqref{embedding A,BtoS},
      \begin{align}
    {X}^\mu(\tau)=\frac{1}{N}\,\lim_{t\to (2n+1)T}\sum^{N}_{k=1}\,\varphi^\mu_{k_i}(x_{k_i}(t)),\quad\,i=A,B,\mathcal{S}.
    \label{definicionofXmu2}
    \end{align}
    This function is $1$-Lipschitz for a convenient choice of the diffeomorphisms $\varphi_k$, that could include local scale redefinitions depending upon $\tau$. This condition is consistent with the assumption that on the domain ${\bf D}_0$ the dominant part of the Hamiltonian is the $1$-Lipschitz Hamiltonian by assumption.

The {\it mean} function coordinate $M^\mu(\tau)$ is assumed to be equal to the mean of the probability distribution $\mu_P$,
\begin{align}
M^\mu(\mathcal{S}(\tau))=\,\frac{1}{N}\,\lim_{t\to (2n+1)T}\,\sum^{N}_{k=1}\,\mu_P(k)(t)\varphi^\mu_{k_i}(x_{k_i}(t))
\label{equationforintermsofdegreesoffreedom}
\end{align}
The mean $M^\mu$ does not depend on the system $i=A,B,\mathcal{S}$ and only depends upon the distribution of probability or measure $\mu_P$. This probability distribution of sub-quantum degrees of freedom depends only upon the macroscopic preparation of the system and not of the system itself: by definition, the probability distribution of the $N$ sub-quantum molecules does not depend upon the particular configurations that they can have. Therefore, we make a third assumption,
\bigskip

{\bf Assumption III}. The mean coordinate functions $M^\mu$ only depends on the preparatory macroscopic conditions.
\bigskip

It is under assumption III  together with the fact that the configuration spacetime is large dimensional and the identification of the evolution in ${\bf D}_0$ with a Lipschitz type evolution, that one can  apply effectively concentration of measure and with the required regularity conditions of the $U_t$-interaction, leading to a formal {\it classical weak equivalence principle}. The mean coordinate world line $\tau\mapsto M^\mu(\tau)$ serves as a {\it guide} to the motion of macroscopic variables.

From now on, we will consider the continuous limit where $d\tau <<\,\tau$. In this case, $\tau$ can be seen as a continuous real parameter $\tau\in\,I\subset\,\mathbb{R}$.
It is important to remark that in general, $\tau\mapsto M^\mu(\tau)$ is not necessarily continuous from the point of view of the spacetime description.
\begin{definicion}
Given a Hamilton-Randers system, we say that is in {\it free quantum evolution} if the condition on the probability distribution
 \begin{align}
 \frac{d}{d\tau}\mu_P(k)(t,\tau)=0,\quad k=1,...,N
 \label{conservation}
 \end{align}
holds good.
\label{freessytem}
\end{definicion}
Thus an interaction in Hamilton-Randers theory is associated with the exchange of sub-quantum degrees of freedom with the environment. Although there is still a long way, this notion suggests the possibility to describe interactions by using quantum field theory. This conjecture is also supported by the fact that Hamilton-Randers models are causal, in the sense that there is a maximal speed for physical degrees of freedom.

By the concentration property \eqref{concentration2} of the $U_t$ dynamics in the Lipschitz dynamical regime ${\bf D}_0$, the $\tau$-evolution of the coordinates $\tilde{X}^\mu(\mathcal{S}(\tau))$, $\tilde{X}^\mu(A(\tau))$ and $\tilde{X}^\mu(B(\tau))$ that have the same initial conditions  differ between each other after the dynamics at $\tau$-time such that
   \begin{align}
  \mu_P\left(\frac{1}{\sigma_{\tilde{X}^\mu}}\,|\tilde{X}^\mu(\mathcal{S}_i(\tau))-M^\mu(\mathcal{S}(\tau))|>\rho\right)_{t\to (2n+1)T}\sim C_1\exp \left(-\,C_2 \frac{\rho^2}{2\,\rho^2_p}\right),
\label{generalconcentrationofmeasure}
\end{align}
$\mu=1,2,3,4,\,i\equiv A,B,\mathcal{S}$ holds.
The constants $C_1,C_2$ are of order $1$, where $C_2$ depends on the dimension of the spacetime $M_4$. $\rho_p$ is a constant independent of the system $i=A,B,\mathcal{S}$. Note that there is no dependence on the $t$-time parameter, since we are considering these expressions in the limit $t\to (2n+1)T$.

  \subsection{Proof from "first principles" of the weak equivalence principle}
  \begin{definicion} A test particle system is described by a Hamilton-Randers system such that the $\tau$-evolution of the center of mass coordinates $\tilde{X}^\mu$  are determined by the initial conditions $\left(\tilde{X}^\mu(\tau=0),\frac{d \tilde{X}^\mu(\tau=0)}{d\tau}\right)$ and the external field.
  \label{definicionoftestparticle}
  \end{definicion}
  Moreover, we assume  that the condition
  \begin{align}
\frac{\rho}{\rho_P}\sim\, N,
\label{scales}
\end{align}
  holds for Hamilton-Randers dynamical systems. This condition indicates an equipartition of the contribution to the difference in coordinates $|\tilde{X}^\mu(\mathcal{S}_i(\tau))-M^\mu(\mathcal{S}(\tau))|$ by each fundamental degree of freedom $k=1,...,N$. One can argue that this condition holds when the system is in state of equilibrium. In Hamilton-Randers theory, these situations happen in the metaestable domain, that is, exactly when the dynamics is also $1$-Lipschitz (\cite{Ricardo2014}, chapter 6). It is under the condition \eqref{scales} that the consequences of concentration in Hamilton-Randers theory take the stronger form.
\begin{proposicion}
Let $\mathcal{S}_i,\,i=1,2,3$ be Hamilton-Randers systems with $N\gg 1$ associated to free test particles. Assume that the dynamical regime is such that the equipartition condition \eqref{scales} holds good.
 Then the observable macroscopic coordinates $\tilde{X}^\mu(\tau)$  do not depend on the system $\mathcal{S}_i$ at each time $\tau$.
 \label{proposiciononweakequivalenceprinciple}
\end{proposicion}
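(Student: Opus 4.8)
The plan is to read the statement off directly from the concentration estimate \eqref{generalconcentrationofmeasure} together with Assumption III, by showing that each of the three observable coordinates $\tilde{X}^\mu(\mathcal{S}_i(\tau))$ is confined to an exponentially narrow window around one and the same mean trajectory $M^\mu(\tau)$, so that their mutual differences are killed by a Chernoff-type bound that collapses when $N\gg 1$.

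First I would verify that the hypotheses needed to invoke concentration of measure are in place. By Proposition \ref{uniform bound for X} together with Assumption II, each coordinate function $\tilde{X}^\mu(\mathcal{S}_i(\tau))$ of \eqref{definicionofXmu} is $1$-Lipschitz on a restricted subdomain of ${\bf D}_0$, and the configuration space $T^*TM$ has dimension $8N$ with $N\gg 1$; these are exactly the two inputs behind the inequality \eqref{generalconcentrationofmeasure}, which I would apply separately to each system $\mathcal{S}_i$, $i=1,2,3$. The decisive step is then to identify the mean. Because the three systems are free test particles (Definitions \ref{freessytem} and \ref{definicionoftestparticle}), the sub-quantum distribution $\mu_P$ is $\tau$-independent by \eqref{conservation}, and by Assumption III the mean coordinate $M^\mu(\tau)$ depends only on the preparatory macroscopic conditions, not on the label $i$. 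Hence, once the systems are prepared with identical macroscopic initial data and placed in the same external field, the map $\tau\mapsto M^\mu(\tau)$ is literally the \emph{same} function for $i=1,2,3$.

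Combining these two observations by a triangle inequality and a union bound over the three systems, I would estimate, for any pair $i,j$,
\begin{align*}
\mu_P\!\left(\frac{1}{\sigma_{\tilde{X}^\mu}}\big|\tilde{X}^\mu(\mathcal{S}_i(\tau))-\tilde{X}^\mu(\mathcal{S}_j(\tau))\big|>2\rho\right)
\leq 2\,C_1\exp\!\left(-\,C_2\frac{\rho^2}{2\,\rho^2_p}\right),
\end{align*}
since each of the two contributions is controlled by \eqref{generalconcentrationofmeasure} around the common mean $M^\mu$, and $\rho_p$, $C_1$, $C_2$ are system-independent.

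The remaining and genuinely delicate point — the one I expect to be the main obstacle — is to pass from this probabilistic bound to the claimed pointwise (in $\tau$) independence of $\tilde{X}^\mu$ on the system. Two $N$-dependent mechanisms conspire in the regime $N\gg 1$: on the one hand $C_2$ depends on the dimension and grows with $N$, driving the exponential factor to zero for each fixed $\rho$; on the other hand the normalizing scale $\sigma_{\tilde{X}^\mu}$, being the dispersion of an average of $N$ sub-quantum contributions, is itself expected to shrink with $N$, so that the \emph{absolute} deviation $\rho\,\sigma_{\tilde{X}^\mu}$ tolerated at fixed probability also vanishes. Together these force the three trajectories to collapse onto the common mean $M^\mu(\tau)$, and the assertion that $\tilde{X}^\mu$ does not depend on $\mathcal{S}_i$ is then an exact law in the limit $N\gg 1$, precisely the Chernoff-type collapse announced in the Introduction. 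The care required is to make the $N$-dependence of $C_2$ and $\sigma_{\tilde{X}^\mu}$ quantitative enough that the exponential suppression dominates uniformly in $\tau$ on the relevant domain of ${\bf D}_0$; this is exactly where the high-dimensionality hypothesis $\dim(T^*TM)=8N\gg 1$ is doing all the work.
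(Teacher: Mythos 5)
Your overall route is the same as the paper's: establish that the $\tilde{X}^\mu(\mathcal{S}_i(\tau))$ are $1$-Lipschitz in the metastable domain, apply the concentration inequality \eqref{generalconcentrationofmeasure} to each system around the mean $M^\mu(\tau)$, use Assumption III together with free evolution \eqref{conservation} to argue that $M^\mu(\tau)$ is one and the same function for all three systems (the paper adds that \eqref{conservation} makes $M^\mu(\tau)$ obey an integrable ODE on $[0,+\infty)$), and conclude that the trajectories collapse onto this common guide. Your triangle-inequality/union-bound step making the pairwise comparison explicit is a reasonable formalization of what the paper leaves implicit.

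The one concrete divergence is in how the $N\gg 1$ suppression is actually obtained, and here your proposed mechanism does not match the paper and would not close on its own. You attribute the collapse to (i) $C_2$ growing with $N$ and (ii) $\sigma_{\tilde{X}^\mu}$ shrinking with $N$. Neither is available: the paper states explicitly that $C_2$ depends on the dimension of the \emph{spacetime} $M_4$ (so it is an order-one constant, not a function of $N$), and no $N$-scaling of $\sigma_{\tilde{X}^\mu}$ is derived anywhere. What the paper does instead is introduce, inside the proof, the additional scaling assumption \eqref{scales}, namely $\rho/\rho_P\sim N$, which when inserted into \eqref{generalconcentrationofmeasure} directly yields the error bound $\exp(-C_2 N^2)$. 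So the step you correctly single out as ``the genuinely delicate point'' is resolved in the paper not by a quantitative analysis of the constants but by an explicit postulate; without \eqref{scales} (or a substitute for it), your argument stalls exactly where you say it does. Note also that the conclusion is then only valid up to an error of order $\exp(-C_2 N^2)$, which is how the paper itself qualifies the statement in the discussion following the proof.
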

\begin{proof}
 The coordinate functions $\tilde{X}^\mu(\tau)$ are $1$-Lipschitz in the metaestable domain $t\to (2n+1)T$. Then we can apply the concentration of measure \eqref{generalconcentrationofmeasure}. Under the assumption \eqref{scales}, the observable coordinates $\{\tilde{X}^\mu\}^4_{\mu=1}$ moves following the common $M^\mu(\tau)$ coordinates with an error bounded by $\exp(-C_2 N^2)$. Since the system is in free evolution, the condition \eqref{conservation} holds, the median coordinates $M^\mu(\tau)$ follow an ordinary differential equation.
 \end{proof}
 \begin{comentario}
 If the relation \eqref{scales} fails to be implemented in some circumstances of the $1$-Lipschitz dynamical regime, still the bound in the measure for the points where $|\tilde{X}^\mu(\mathcal{S}_i(\tau))-M^\mu(\mathcal{S}(\tau))|$ is larger that $\rho$ is a Gaussian function. 
 \end{comentario}
 Our application of concentration, as in many others applications of concentration in geometry, probability theory and statistical mechanics (see for instance \cite{Gromov}, pg. 144 to 151) is based upon two fundamental conditions:
 \begin{itemize}
 \item The functions under consideration are $1$-Lipschitz functions in the metric of the space,

 \item The spaces under consideration are high dimensional spaces.
 \end{itemize}
 The failure of any of these conditions implies the failure in the strength of the consequences of concentration in the way used in the following proof of a formal weak equivalence principle. Indeed, for the argument below the condition \eqref{scales} with $N>>1$ is essential. On the other hand, if the dimension of the spaces is small, the concentration inequalities will not have any strong in the consequences of concentration, while if the conditions of $1$-Lipshitz fails, the concentration of measure inequalities will simply not hold.

 Therefore, in the subset of metaestable domain for $t= (2n+1)T$ there is a strong concentration for the value the functions $\{\tilde{X}^\mu(\tau)\}^4_{\mu=1}$  around the mean $\{M^\mu(\tau)\}^4_{\mu=1}$. This universality is after fixing the initial conditions of the mean coordinates $M^\mu(\tau)$, which is equivalent to fix the initial conditions for $\{u^\mu_k\}^{N}_{k=1}$. This result can be interpreted as the  weak equivalence principle.

Despite being applied to classical trajectories only, the derivation of the weak equivalence principle offered along these lines implies that theoretically the weak equivalence principle should be an almost exact law of Nature, only broken at scales comparable with the fundamental scale. This is because the weak equivalence principle just derived is valid up to an error of order $\exp(-C_2 N^2)$. However, the principle breaks down abruptly when the system described is  composed by few fundamental degrees of freedom.

Although the current state of the theory cannot determine the constant $C_2$, this does not invalidate our argument. Also, in order to obtain firm bounds on the exponential, it is necessary to know the values of $N$ for particular systems. At the moment, we assume that $N$ is large enough to effectively apply the concentration of measure argument.

\section{On the $1$-Lipschitz continuous character of gravity}
In the derivation of the formal weak equivalence principle we have made use of the assumption that the dynamics is, for the corresponding regime, $1$-Lipschitz. Although this could be formally the case, due to the formal decomposition discussed in section \ref{Lower bound for the Hamiltonian}, in order to close the argument, we should show that gravity is a $1$-Lipschitz interaction. This is in general a difficult issue, but we can argue that this should be the case considering the simplest description of gravity, namely, Newtonian gravity.

Let us consider the newtonian gravitational force between a massive point particle  with mass $m$ by a massive point particle with mass $M$  located at the origin of coordinates,
   \begin{align}
   F(\vec{x})=\,-G\,\frac{m\, M}{r^2}, \quad \vec{x}\in\,\mathbb{R}^3
   \label{Newtonlaw}
   \end{align}
  and $r=\,|\vec{x}\|$ the distance to the origin in $\mathbb{R}^3$ of  the point $\vec{x}$.  The newtonian gravitational potential $V(\vec{x})$ lives on the collection of Euclidean spheres
  \begin{align*}
   \hat{S}^2:=\{S^2(r),\,r\in\,(0,+\infty)\},
   \end{align*}
 where the expression $\|r_1-r_2\|$ defines a norm function between different spheres. Moreover, to compare different lengths or different mechanical forces, it is useful to consider dimensionless expressions, for which we need reference scales.

  The Planck scale force provides a natural unit, respect to which  we can compare any other scale.
Let us consider the comparison of forces
\begin{align*}
\frac{|F(\vec{x}_2)-F(\vec{x}_1)|}{{F}_P}=\,\alpha \,\frac{\|r_2-r_1\|}{\l_P},
\end{align*}
where $F_P$ is the Planck force and $\l_P$ is the Planck length. After some algebraic manipulations, one finds an expression for the coefficient $\alpha$. In  the case of Newton law of universal gravitation \eqref{Newtonlaw}, $\alpha$ is given by the expression
\begin{align*}
\alpha=\,\l_P\,\frac{1}{c^4} \, G^2 \,m\,M\,\frac{1}{r^2_2\,r^2_1}\,\|r_2+\,r_1 \|.
\end{align*}
In order to simplify the argument, let us  consider $m=M$. Furthermore, although the case $r_2=r_1$ is singular, in order to work in a fixed scale, we consider a relation $r_1= \lambda\, r_2$ with $\lambda\sim 1$ constant. Then one obtains a compact expression for $\alpha$,
\begin{align}
\alpha=\, \frac{1+\lambda}{\lambda^3}\,\frac{D}{D_p}\,\frac{E}{E_P},
\label{Lipschitzconstantforgravity}
\end{align}
where $D=\,{m}/{r^3}$ is a characteristic density of the system, $E=mc^2$, $D_P$ is the Planck density and $E_p$ is the Planck energy. It follows from the expression \eqref{Lipschitzconstantforgravity} that for scales of the standard model, atomic physics systems or macroscopic systems,  $\alpha\ll 1$. Moreover, $\alpha$ is bounded by $1$. The bound is saturated at the Planck scale. This shows that at macroscopic or quantum physical scales, gravity is $1$-Lipschitz. This is because the relative weakness of the gravitational interaction compared with the interactions of the Standard Model of particles.

  \section{Conclusion: On the emergent origin of the gravitational interaction in Hamilton-Randers theory}\label{euristicargumentforemergenceofgravity}
   Bringing together  the previous characteristics for the $1$-Lipschitz domain of the $U_t$ flow, we find  the following general features:
   \begin{itemize}
   \item Since the constraint \eqref{equilibriumlimit} holds good, the dynamical $U_t$ flow in the domain ${\bf D}_0$ is compatible with the time reparametrization invariance  of general relativity.

   \item The weak equivalence principle for the observable coordinates $X^\mu(S(\tau))$ holds good in the metastable domain ${\bf D}_0$.

   \item The dynamical $U_t$ flow in the domain ${\bf D}_0$ determines a classical interaction, since the metastable domain ${\bf D}_0$ of the fundamental dynamics corresponds to the domain where all the possible observables of the systems have well defined values.

   \item There is a local maximal speed for the fundamental degrees of freedom of any Hamilton-Randers dynamical system. Invariance under a local relativity group invariance holds. This local relativity group is by construction the Poincar\'e group.

   \end{itemize}
   Furthermore, we have found the following an additional restriction,
   \begin{itemize}

   \item The $U_t$ interaction in the $1$-Lipschitz domain must be compatible with the existence of an universal maximal proper acceleration.

   \end{itemize}
    In view of the formal similarity of these properties with the analogous properties of the current mathematical description of the gravitational interaction, one is naturally lead to the following conclusion,
\begin{center}
   {\it In the metastable domain the $1$-Lipschitz dynamics associated with ${H}_{Lipshitz,t=(2n+1)T}$ is a form of gravitational interaction compatible with the existence of an universal maximal acceleration.}
\end{center}

That gravity could be intrinsically involved in the collapse of the wave functions is an idea that appears in several modern approaches to the measurement problem \cite{Diosi, GhirardiGrassiRimini, Penrose}. However, there are fundamental differences between the models described here and spontaneous collapse models or collapse models induced by large mass measurement devices. Furthermore, our argument shows that gravitation is an emergent interaction, not applicable to the fundamental degrees of freedom. How this consequence  marries Hamilton-Randers theory with another fundamental property of current theories of gravitation, namely, the general covariance or absence of back-ground geometric structures, is another problem that deserves further investigation and separate exposition.

Therefore, from the point of view of Hamilton-Randers theory, gravity is a classical interaction. Furthermore, there must exist essential differences between gravitational models compatible with Hamilton-Randers theory and Einstein's general relativity, since our theory includes an universal  maximal proper  acceleration. It is also interesting the possibility that a generalization of Einstein gravity in the frameworks of metrics with a maximal acceleration compatible with the weak equivalence principle  could lead to a {\it classical resolution of curvature singularities} \cite{CaianielloGasperiniScarpetta, Ricardo2020}.

\appendix
\section{Hamilton-Randers structures}\label{Hamilton-Randers structures}
{\bf Notion of pseudo-Randers space}. A natural way to introduce a non-reversible dynamics is to consider a non-reversible perturbation to a reversible one. This is exactly the fundamental ingredient of the concept of Randers spaces \cite{Randers}. We will introduce a general notion of Randers space in the following paragraphs.
\begin{definicion}
In the category of Finsler spaces with Euclidean signature, a Randers structure defined on the manifold $\widetilde{M}$ is a Finsler structure such that the associated Finsler function is of the form
\begin{align*}
F^*:T\widetilde{M}\to \mathbb{R},\quad  (u,z)\mapsto \alpha^*(u,z)+\beta^*(u,z),
\end{align*}
such that The condition
 \begin{align}
 \alpha^*(\beta^*,\beta^*)<1
 \label{randerscondition2}
 \end{align}
 must be satisfied.
$\alpha^*(u,z)$ is the Riemannian norm of $z\in \,T_u\widetilde{M}$ determined by a Riemannian metric $\eta^*$, while $\beta^*(u,z)$ is the result of the action $1$-form $\beta^*\in \,\Gamma T^*\widetilde{M}$ on  $z$.
 \label{Randers space}
 \end{definicion}

 This condition \eqref{randerscondition} implies the non-degeneracy  and the positiveness of the associated fundamental tensor
  \begin{align}
g^{ij}(u,p)=\,\frac{1}{2}\,\frac{\partial^2 F^2(u,p)}{\partial p_i\partial p_j}.
\label{fundamentaltensor}
\end{align}
The proofs of these properties are indicated for instance  in \cite{BaoChernShen}. The non-degeneracy of the fundamental tensor is an analytical requirement for the construction of associated connections and also, for the existence of geodesics as local extremals of an action or energy functional \cite{Whitehead1932}.

 We now consider the analogous of a Randers structure in the category of generalized Hamiltonian functions on the configuration space $TM$ whose fundamental tensors \eqref{fundamentaltensor} are non-degenerate and have indefinite signature. In this case the domain of definition of the Hamiltonian function $F$ should be restricted, since it is not possible to have a well defined Hamilton-Randers function on the whole cotangent space $T^*TM$. This is because $\eta$ is a pseudo-Riemannian metric and it can take negative values on certain regions of $T^*_uTM$, in which case the function $\alpha(u,p)$ is purely imaginary and cannot be the value of a reasonable Hamiltonian function. This argument motivates to consider the collection $\mathcal{D}_{Tu}$ of {\it time-like momenta} over $u\in TM$ is defined by the set of co-vectors $p\in \,T^*_u TM$ such that
\begin{align}
\alpha(u,p)=\,\sum^{8N}_{i,j=1}\,\eta^{ij}(u)\,p_i \,p_j\,>0.
\label{definitionofthecone}
\end{align}

The domain of a Hamilton-Randers function is restricted to  be the topological closure of the open submanifold $\mathcal{D}_T$ of  time-like  momenta. This is indeed a cone: if $p\in \,\mathcal{D}_{Tu}$, then $\lambda\,p\in\mathcal{D}_{Tu}$ for $\lambda\in\,\mathbb{R}^+$. Also, $\mathcal{D}_{Tu}$  is the pre-image  of an open set $(0,+\infty)$ by the Randers type function $F(u,p)$, which is continuous function on the arguments. Therefore, $\mathcal{D}_{Tu}$ is an open sub-manifold of $T^*_uTM$.

The notion of pseudo-Randers space is formulated in terms of well defined geometric objects, namely, the vector field $\beta\in\,\Gamma TTM$ and the pseudo-Riemannian norm $\alpha$. Because of this reason, a metric of pseudo- Randers type is denoted by the pair $(\alpha,\beta)$.
\bigskip
\\
{\bf Notion of Hamilton-Randers space}.
Let $\beta \in \Gamma\,TTM$ be a vector field on $TM$ such that the dual condition to the Randers condition \eqref{randerscondition}, namely, the condition
\begin{align}
|\eta^*(\beta,\beta)|<\,1, \quad  \beta\in \,\Gamma TTM
\label{boundenesscondition}
\end{align}
holds good.
\begin{definicion}
A Hamilton-Randers space is a generalized Hamilton space whose  Hamiltonian function is of the form
\begin{align}
F:\mathcal{D}_T\to \mathbb{R}^+\cup \{0\},\quad
(u,p)\mapsto\, F (u,p)=\,\alpha(u,p)+\beta(u,p).
\label{corandersmetric}
\end{align}
with  $\alpha=\,\sqrt{\eta^{ij}(u)p_ip_j}$ real on $\mathcal{D}_T\,\subset T^*TM$ and where
\begin{align*}
\beta(u,p)=\,\sum^{8N}_{i=1}\,{\beta}^i(u) p_i,
\end{align*}
 such that $\beta$ is constrained by the condition \eqref{boundenesscondition}.
\label{DefinicionHR}
\end{definicion}

A Hamilton-Randers space is characterized by a triplet $(TM,F,\mathcal{D}_T)$. Such structures admits a $t$-time inversion operation. Application of the time inversion and after several formal manipulations and transformations, as discussed in section \ref{Introduction Hamilton-Randers spaces}, the effective Hamiltonian is of the form
\begin{align}
H_t(u,p)=\,\sum^{8N}_{k=1}\,\beta^k(u)p_k.
\label{Classical Hamiltonian}
\end{align}
This Hamiltonian function defines the $U_t$ dynamics of the fundamental degrees of freedom.

 \section{Concentration of measure}
 We first start this introduction to concentration phenomena with the views on applications to Hamilton-Randers theory with a classical example of limiting law.
 \begin{ejemplo}\label{law of large numbers}{\bf Law of large numbers for independent Bernoulli's trials}.
Consider an independent sequence of Bernoulli's random variables $\{\epsilon_i,\,i=1,...,N\}$. The probability distributions are $P(1)=\,P(-1)=1/2$. Let $B_N$ the number of one in the sequence. Then elementary considerations leads to the inequality
\begin{align}
P\left(\left| B_N-\,\frac{N}{2}\right|\geq x \right)\leq 2\,\exp\left(-\,\frac{2\,x^2}{N}\right).
\end{align}
It turns out that, this inequality is not strong for $N$ not large. That is, the probability $P\left(\left| B_N-\,\frac{N}{2}\right|\geq x \right)$ does not satisfies a strong bound for arbitrary large $N$.
Let us consider the function $(\epsilon_1,...,\epsilon_N)\mapsto X=\,\sum^N_{i=1}\,\epsilon_i -\,N/2.$ The function is Lipschitz.
\end{ejemplo}
 The two elements of the above example that imply strong bounds on dispersion around median are:
 \begin{itemize}

\item The $1$-Lipschitz condition  holds (depending of the space, that a nice enough regularity condition for $f$ must hold, as in the case of Levy's isoperimetric inequality \cite{MilmanSchechtman2001}). For some spaces, other lighter regularity conditions imply also concentration (see for instance Levy's lemma for continuous functions on spheres \cite{MilmanSchechtman2001}). In general, some short of regularity for the functions that concentrate is required.

 \item The condition of large $N$ (large dimensional spaces) must hold. Although concentration inequalities hold, only if $N$ is large, the almost constant character of the function  on the given space will hold good.

 \end{itemize}
 The realization of this phenomenon is the basis for the theory of concentration in asymptotic analysis, geometry and probability theory \cite{MilmanSchechtman2001,Gromov,Talagrand}.

 Let us discuss the application of concentration of measure (see for instance the theory as developed in \cite{Gromov, MilmanSchechtman2001} and also the introduction from \cite{Talagrand}) to Hamilton-Randers dynamical models. The concentration of measure is a general property of regular enough functions defined in high dimensional topological spaces  ${\bf T}$ endowed with a metric function $d:{\bf T}\times {\bf T}\to\mathbb{R}$ and a Borel measure $\mu_P$ of finite measure, $\mu_P({\bf T})<\,+\infty$ or a $\sigma$-finite measure spaces (countable union of finite measure spaces). For our applications, we also require that the topological space {\bf T} has associated a local dimension. This will be the case, since we shall have that ${\bf T}\cong T^*TM$, which are locally homeomorphic to $\mathbb{R}^{16\,N}$, if $M_4$ is the four dimensional spacetime and $N$ is the number of sub-quantum molecules determining the Hamilton-Randers system. The phenomenon of concentration of measure for the category of topological spaces  admitting a well defined dimension, can be stated as  follows \cite{Talagrand},
 \bigskip
\\
 {\it In a  measure metric space of large dimension, every real $1$-Lipschitz function of many variables is almost constant almost everywhere.}
\bigskip
\\
Here the notion of {\it dimension} needs to be specified, since the space in question is not necessarily of the form $\mathbb{R}^{16N}$, but for our applications the notion of dimension is the usual one of differential geometry.

In the formalization of the concept of concentration of measure one makes use of the metric and measure structures of the space {\bf T} to provide a  precise meaning for the notions of {\it almost constant} and {\it almost everywhere}. The notions of measure structure $\mu_P$ and metric structure $d:{\bf T}\times {\bf T}\to \mathbb{R}$ are independent from each other. Indeed, the standard framework where  concentration is formulated is in the  category of mm-spaces \cite{Gromov} and for $1$-Lipschitz functions. However, for the reasons discussed above, we shall pay attention to a class of such spaces, namely, the one admitting a well defined local dimension. Despite this, the spaces that we shall consider will be called $mm$-spaces and denoted as for example $({\bf T},\mu_P,d)$.

In a measure metric space $({\bf T},\mu_P, d)$, the {\it concentration function}
\begin{align*}
\alpha(\mu_P):\mathbb{R}\to \mathbb{R},\quad \rho\mapsto \alpha(\mu_P,\rho)
\end{align*}
is defined by the condition that $\alpha(\mu_P,\rho)$ is the smallest real number such that
\begin{align}
\mu_P(|f-M_f|>\rho)\leq 2\,\alpha(\mu_P,\rho),
\label{concentationformula}
\end{align}
for any $1$-Lipschitz function $f:{\bf T}\to \mathbb{R}$. Thus $\alpha(\mu_P,\rho)$ does not depend on the function $f$. $M_f$ is the {\it median} or  {\it Levy's mean} of $f$, which is defined as the value attained by $f:{\bf T}\to \mathbb{R}$ such that
\begin{align*}
\mu_P(f>M_f)=1/2\,\textrm{ and }\, \mu_P(f<M_f)=1/2.
\end{align*}
Therefore,  the {\it probability} that the function $f$  differs from the median $M_f$ in the sense of the given measure $\mu_P$ by more than the given value $\rho\in \,\mathbb{R}$ is bounded by the concentration function $\alpha(\mu_P,\rho)$. Note that the function $f$ must be conveniently normalized, in order to differences be compared with real numbers.

\begin{ejemplo}
A relevant example of concentration of measure is provided by the concentration of measure in spheres $\mathbb{S}^N\subset \, \mathbb{R}^{N+1}.$ Let $(\mathbb{S}^N,\mu_S, d_S)$ be the $N$-dimensional sphere with the standard measure and the round metric distance function. As a consequence of the isoperimetric inequality \cite{MilmanSchechtman2001} it holds that for each $A\subset \,\mathbb{S}^N$ with $\mu(A)\geq 1/2$ and  $\epsilon\in \,(0,1)$, the set
\begin{align*}
A_{\epsilon}:=\{x\in\,\mathbb{S}^N\,s.t.\,d_S(x,A)\leq\epsilon\}
 \end{align*}
 is such that
 \begin{align}
 \mu_{{S}}(A_{\epsilon})\geq 1-\sqrt{\pi/8}\exp(-\epsilon^2\,(N-1)/2).
  \label{concentrationforspheres}
  \end{align}
  Let $f:\mathbb{S}^N\to \mathbb{R}$ be a $1$-Lipschitz function and let us consider the set
\begin{align*}
 A:=\{x\in\,\mathbb{S}^N\,s.t.\,f(x)\leq\,M_f\}.
 \end{align*}
 Then from the definition of the Levy's mean, it turns out that $\mu_S(A)\geq 1/2$, which leads to the
 the concentration inequality \cite{MilmanSchechtman2001}
 \begin{align*}
 \mu_{{S}}(A_{\epsilon})\geq \,1-\sqrt{\pi/8}\,\exp(-\epsilon^2\,(N-1)/2).
 \end{align*}

  An important consequence of the relation \eqref{concentrationforspheres} is that for high dimensional spheres  $N\to +\infty$ and for each $\epsilon\in\,(0,1)$ (note that the radius of the sphere is normalized, such that $\epsilon =1$ is the maximal distance between points in the sphere), for almost  all the points on the sphere (that is, module a set of measure zero by the measure $\mu_S$) the limit condition
  \begin{align*}
  \lim_{N\to \infty}\,\mu(A_{\epsilon})\to 1
  \end{align*}
  holds good.
  That is, the $1$-Lipschitz function $f$ must be almost constant on $\mathbb{S}^{N}$, concentrating its value at the median $M_f$. In particular, for the sphere $\mathbb{S}^N$
the concentration of $1$-Lipschitz functions is of the form
\begin{align}
\alpha(P_M,\rho)\leq \, C \exp\left(-\frac{(N-1)}{2}\,\rho^2\right),
\label{concentration1}
\end{align}
with $C$ a constant of order $1$.
\label{Example concentration 1}
\end{ejemplo}
\begin{ejemplo}
The second example of concentration that we consider here refers to  $1$-Lipschitz real functions on $\mathbb{R}^N$
 (compare with \cite{Talagrand}, pg. 8). In this case,  the concentration inequality is of the form
\begin{align}
\mu_P\left(\left|f-M_f\right|\,\frac{1}{\sigma_f}\,>\frac{\rho}{\rho_P}\right)\leq \, \frac{1}{2} \exp\left(-\frac{\rho^2}{2\rho^2_P}\right),
\label{concentration2}
\end{align}
where we have adapted the example from \cite{Talagrand} to a Gaussian measure $\mu_P$ with median $M_f$. In the application of this concentration to Hamilton-Randers models, $\rho_P$ is a measure of the minimal standard contribution to the distance $\rho$ per unit of degree of freedom of the Hamilton-Randers system. $\frac{\rho}{\rho_P}$ must be independent of the function $f$. $\sigma_f$ is associated to the most precise physical resolution of any measurement of the quantum  observable associated to the $1$-Lipschitz function $f:\mathbb{R}^N\to \mathbb{R}$.
\label{Example concentration 2}
\end{ejemplo}

\begin{comentario}
The condition of large dimension is not enough to have the effect of concentration of measure, even for large dimensional spaces, as the inequality \eqref{concentration2} of the above Example \ref{Example concentration 2} shows (see also the equivalent inequality in (2.9) in \cite{Talagrand}).
\end{comentario}

\end{document}